\renewcommand\footnotetextcopyrightpermission[1]{} 
\newcommand{\figspace}{\vspace{-0mm}}
\newcommand{\abdullah}[1]{\textcolor{brown}{\textbf{Abdullah: #1}}} 
\begin{document}

\title{Trustworthy Edge Computing through Blockchains}

\author{Abdullah Al-Mamun and Dongfang Zhao
\\\textit{University of Nevada, Reno}
}



\begin{abstract}
Edge computing draws a lot of recent research interests because of the performance improvement by offloading many workloads from the remote data center to nearby edge nodes.
Nonetheless, one open challenge of this emerging paradigm lies in the potential security issues on edge nodes and end devices, e.g., sensors and controllers.
This paper proposes a cooperative protocol, namely DEAN, across edge nodes to prevent data manipulation, and to allow fair data sharing with quick recovery under resource constraints of limited storage, computing, and network capacity. 
Specifically, DEAN leverages a parallel mechanism equipped with three independent core components, effectively achieving low resource consumption while allowing secured parallel block processing on edge nodes.
We have implemented a system prototype based on DEAN and experimentally verified its effectiveness with a comparison with three popular blockchain implementations: Ethereum, Parity, and Hyperledger Fabric.
Experimental results show that the system prototype exhibits high resilience to arbitrary failures: the percentile of trusty nodes is much higher than the required 50\% in most cases.
Performance-wise, DEAN-based blockchain implementation outperforms the state-of-the-art blockchain systems with up to $25\times$ higher throughput and $18\times$ lower latency on 1,000 nodes.
\end{abstract}

\maketitle
\section{Introduction}




\subsection{Motivation}

Edge computing~\cite{wshi_iot16} offers an efficient means to process collected data (e.g., through sensors and other end devices) at nearby edge nodes as opposed to transferring data back to remote data centers, e.g., cloud computing.
Obviously, edge computing saves the network traffic and improves the application performance, particularly for those latency-sensitive applications such as virtual reality~\cite{lightdb_vldb18}.
Nevertheless, edge computing brings several new technical challenges including security and privacy concerns with edge nodes and sensors~\cite{zhang_infocom19}.
The root cause of these new concerns lies on the fact that most security techniques used in data centers and cloud computing~\cite{socc2017capnet, socc2018multi,hzhou_infocom19}
are hardly applicable directly to the edge nodes and sensors.
We highlight two outstanding discrepancies between cloud computing and edge computing in the following.

\begin{itemize}

\item \textbf{System Infrastructure.}
The edge nodes and end devices constitute a loosely-coupled distributed system of highly heterogeneous participants with wireless connections.
For instance, edge computing nodes range from smart phones to workstations mostly connected through wireless networking such as WiFi and ZigBee,
whilst cloud computing data centers comprise racks of on-premises homogeneous blade-servers interconnected with high-speed network infrastructures.

\item \textbf{Resource Constraints.}
The edge nodes and end devices such as sensors are equipped with many constrained resources in terms of CPU, memory, storage, network, and power.
Therefore, many assumptions well-accepted in data centers do not hold in edge computing.
As a case in point, a typical sensor only has about 10-100 MB memory compared with tens of gigabytes of memory available on the blade-servers in data centers.
Other resource constraints in edge computing include but not limited to:
no or small storage capacity, battery power, to name a few.
\end{itemize}

Due to the openness and resource-constraint security mechanisms of edge computing systems,
various security incidents were repeatedly reported.
One notable incident was that a Jeep SUV was remotely hijacked through its UConnect's cellular connection~\cite{jeephacker_2015}.
In addition, mobile devices are vulnerable to malicious applications to a large degree~\cite{malware-G-kelly-2014},
e.g., the users install applications from untrusty third-party sources~\cite{malware-li2018significant}.

One plausible approach to tackle the security challenge in edge computing is to encrypt the edge data with stronger mechanisms such as public key infrastructure (PKI)~\cite{jchen_infocom18}.
Unfortunately, the computing capability of edge sensors is limited; 
even if they can carry out the encryption (of every single message), 
the power consumption is prohibitive as they are mostly battery-powered~\cite{jwang_icpp18}.
While a single node can hardly protect the data,
the following approaches are based on the cooperation between multiple nodes.
In \textit{dependable systems}, there are a vast of studies derived from Paxos~\cite{lamport_paxos02} and Practical Byzantine Fault Tolerance (PBFT)~\cite{castro_pbft02}. 

Both Paxos and PBFT require a lot of message passing,
likely saturating the bandwidth of edge node's wireless network fairly quick.
One notable variant is based on full replication:
each node stores a replica of the data initiated by the leader and the integrity is guaranteed as long as no more than 50\% of nodes are compromised. 
This approach requires less communication and instead takes more storage space.
Of note, the last approach reflects the design philosophy of public blockchains (Bitcoin~\cite{bitcoin} being, arguably, the most successful application).
Nonetheless, an existing blockchain cannot be directly applied to an edge network:
blockchains (or shards thereof) require all participants (full nodes) to store the entire history of the data provenance,
which clearly cannot be accommodated by most end devices.
\textit{To summarize, existing security and fault tolerance approaches pose unrealistic expectations on compute, network, or storage in the context of edge computing.}

If we reexamine the aforementioned approaches to trustworthy edge computing, 
the blockchain-based approach seems promising:
storage or space constraints might be overcome by leveraging software approaches such as compression and shallow replication.
This is partly why various industries have recently spent much effort in investigating taking blockchain or its variants for their application-specific needs.
For instance, ``with the advent of blockchain, decentral[ized] data management can be implemented in a privacy-preserving and efficient way,''
recently stated car manufacturer BMW~\cite{gmbmw_bc2019}.
However, blockchain is considered more like a blackbox in this case without clear solutions to existing obstacles.
To make it more specific:
Although blockchain has proven its high security in cryptocurrency, 
the consensus protocols taken by all popular blockchains assume abundant resources (in terms of computation, storage, and network) of the system infrastructure.
New consensus protocols must be designed and tailored to the specific characteristics of edge nodes and end devices.

\subsection{Proposed Solution}
This paper proposes a set of new blockchain protocols for edge nodes (along with end devices) under resource constraints. 
We name the edge network powered by the new protocol DEAN:
Decentralized-Edge Autonomous Network,
as illustrated in Figure~\ref{fig:3tier_arch}.
The key idea of our decentralized protocols is threefold:
(i) leveraging the resources (i.e., persistent space and computational power) available in the edge nodes to ameliorate the pressure on the end devices (e.g., edge sensors);
(ii) taking a very light-weight but reliable consensus protocol to support scalability and fast processing of client requests; and
(iii) balancing the storage pressure among edge nodes that enables fair sharing of the data.
In addition to the above requirements, the new protocol must ensure, with provable guarantees, that the entire system's data integrity is not tampered with.

 \begin{figure}[!t]
 	\centering
 	\includegraphics[width=75mm]{./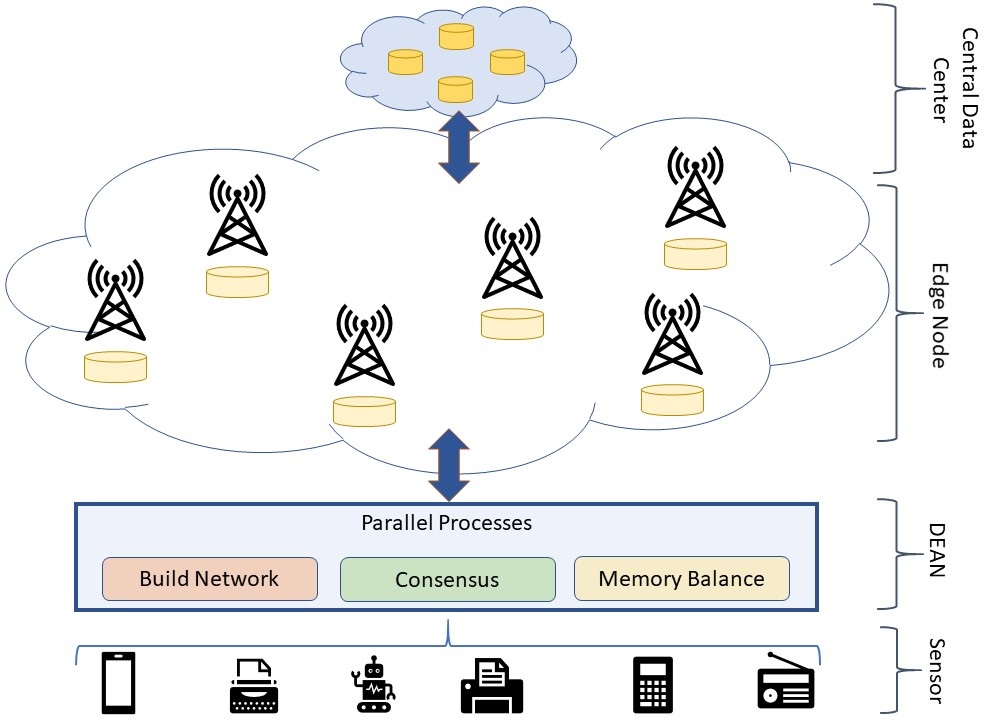}
 	\caption{
 	 Proposed Four-Tier Edge Computing Architecture with DEAN. 
 	}
 	\label{fig:3tier_arch}
\end{figure}

While our primary goal of exploiting energy-efficient consensus protocol is to provide lightweight and space-efficient security,
one by-product of this design is the high performance of the proposed protocol:
In addition to the strong data integrity, our experimental results show that the system prototype built upon DEAN outperforms major blockchain systems (Ethereum~\cite{ethereum}, Parity~\cite{parity}, and Hyperledger Fabric~\cite{hyperledger}) by orders of magnitude in terms of both throughput and latency,
which is attributed to the parallelism of block processing in DEAN.

In summary, this paper makes the following contributions:
\begin{itemize}
    
    \item We propose a lightweight consensus protocol (\S\ref{subsec:dean_protocol}) to adopt blockchains in edge computing that consumes less energy and supports parallel block processing,
    allowing us to scale the system out to thousands of nodes while keeping low the latency of processing clients' requests.
    
    \item We propose a parallel data dissemination strategy that not only maintains the equitable sharing of the storage among the edge nodes but also provides the capability of recovery and fast accesses to the data. We also devise a mechanism for quick recovery of missing blocks in case of edge node failures. (\S\ref{subsec:dean_protocol})
    
    \item We theoretically prove DEAN's safety and liveness: as long as more than 50\% nodes are not compromised, the system as a whole remains intact and secure; the protocol is not blocking the execution of the application even some nodes fail silently. 
    (\S\ref{subsec:dean_correctness} and \S\ref{subsec:dean_progress})
    \item We implemented the proposed protocol and optimization for edge computing with a blockchain framework.
    The framework represents a full stack of the blockchain system including large numbers of participating nodes, encryption modules based on SHA-256, distributed network queues, the DEAN consensus protocols, and so forth. 
    We carry out extensive experiments to evaluate the strong security and high efficiency of DEAN and compare them against state-of-the-art blockchain systems.
    (\S\ref{sec:eval})
\end{itemize}

\section{Problem Statement and Challenges}
\label{sec:bg}

\textit{The research question this paper thrives to answer is how to extend existing blockchain technology for data trustworthiness with limited resources in the unique infrastructure of edge computing.}

There are various challenges to be addressed. 
The system infrastructure in edge computing is a completely different story than that of workstations and cluster of servers:
a large portion of the system is comprised of sensors who have limited computation power, negligible storage capacity, and wireless network connections. Though the edge nodes comparatively have more resources (i.e., computation power, disk space, and bandwidth), it is still challenging to deploy the traditional blockchain systems due to extensive storage requirements and energy consumption. Ideally, the blockchain crafted for edge computing should incur lightweight resource usage both on the edge nodes and sensors in terms of computation, storage, and networking.

It follows that neither conventional public blockchains (high computation power) nor private blockchains (high network traffic) meet our goal. Present protocols (e.g., proof-of-work, proof-of-stake, practical byzantine fault tolerance, or hybrid) do not offer a mechanism that provides a precise balance among the reliability, computation, and storage requirements while affords the scalability among the thousands of node. The proof-of-stake appears to be suitable to apply in edge computing as it offers low complexity of communication and computational work. However, due to several reliability issues (e.g., nothing-at-stake, or Sybil attack due to the possibility of initial stake procurement from the public pool), it does not meet the requirements to apply directly to the edge computing yet. 
To make it worse, all existing blockchain paradigms, 
no matter public, private, or a variant/hybrid thereof,
require an unrealistic storage capacity on the nodes.

Although a recent work~\cite{huang2019} focuses on the resource-constrained blockchain system for edge computing at a small scale that attempts to pre-compute which edge devices will store the block before traditional a proof-of-stake supported mining process starts. This pre-computation of block allocation takes a significant portion of block mining time. Therefore, the work does not consider the quick query or transaction processing issue, one of the critical parts of edge computing. Precisely, the main deployment goal of the edge devices is to help edge sensors to quickly process data so that the dependency on the central server can be relaxed. Another issue is, the approach solely depends on the traditional proof-of-stake consensus mechanism that enables stakeholders to procure an initial token from a public pool to participate in the mining process; hence, it still holds a loop-hole in the design.

The goal of the proposed protocols presented in this paper is to provide a low-complex and reliable consensus mechanism tailored with a space-efficient data replication strategy that quickly responds to a transaction request while capable enough to scale up to the thousands of nodes without compromising the safety.
To this end, we design new protocols to allow blockchains work with (1) \textit{limited} computation power and network bandwidth and (2) \textit{relaxed} storage requirements.

\subsection{Threat Model}
We consider the worst case in which an internal adversary has got authenticated to an edge node. With the authorized access to the edge devices, the attacker may attempt to alter or modify a transaction record, commit a false transaction (i.e., fraud), perform a denial-of-service attack on other users through an artificial escalation of security level, or even send unauthenticated messages. To be more specific, we are interested to see if the internal adversary
can compromise data in 51\% edge devices powered by the proposed mechanism.

Another crucial challenge in DEAN is forging of node attributes (e.g., adjacency list, geographic location, or activity timer, etc.) to achieve unfair share. Nodes can closely monitor each other and check the attributes that are periodically updated. The attribute table is shared with the majority (i.e., at least 51\%) of the adjacent nodes through the \textit{Build-Network} process (i.e., Protocol~\ref{alg:network}). Therefore, any inconsistent operation will help the fellow nodes to identify the malicious activities and eliminate the faulty nodes from the trusted node list.

\subsection{Resource Requirement of Blockchains}

Blockchains have proven to be effectively dealing with the adversary attacks and drawn a great deal of research interests in various communities such as security~\cite{mzamani_ccs18}, networking~\cite{jwang_nsdi19}, and databases~\cite{hdang_sigmod19}.
However, all existing blockchains assume that there are total $n$ \textit{heterogeneous} nodes,
each of which has sufficient computation, storage, and networking capability.
For the sake of brevity, we use $c_i$, $s_i$ and $p_i$ to indicate the aforementioned three types of resources in the $i$-th node $n_i$, respectively.
Specifically, literature assumes the following constraints are satisfied when deploying a blockchain:
\begin{itemize}
    \item $c_i$ serves two main objectives in blockchains. 
    \textbf{First}, in public blockchains\footnote{A public blockchain is open to everyone who can access the Internet.}, $c_i$ competes in a computation-intensive race\footnote{Also called a ``mining'' procedure and the node then becomes a ``miner''.} to hopefully become the winner who would be qualified to add a new block of transactions to the chain.
    The race, for example, could be finding out a specific integer such that its hash value (using the given hash function) satisfies some specific properties.
    The incentive for the nodes to join the competition is two-fold:
    (i) The winner receives some rewards from the blockchain itself.
    At the writing of this paper, every new block is awarded with 12.5 bitcoins in the Bitcoin network, and the price of one bitcoin is higher than \$10,000 USD in the exchange market;
    (ii) The winner also receives some transaction fee from the users who request the transaction.
    As such, many nodes are equipped with high-end GPUs to accelerate their competitiveness in the race.
    \textbf{Second}, in all blockchains, $c_i$ verifies multiple hash values, notably the hash value of the \textit{previous block} and the hash value of the \textit{current block body} stored at the \textit{block header}.
    Due to limited space, we point the readers who are interested in more details of blockchains a recent survey paper~\cite{kzhang_icdcs18}.
    Compared with the race computation, the verification takes much less, usually negligible, computation power. 
    
    \item $s_i$ stores the entire history of transaction since the inception of the network. 
    The current Bitcoin network, for example, requires each node to store more than 100 GB data locally.
    The I/O performance and storage capacity are usually not the bottleneck of the blockchain system,
    especially with the continuously dropping price of high-performance persistent storage such as SSDs and NVMs.
    
    \item $p_i$ provides a port by which the blocks and messages are routed between the nodes. 
    In public blockchains, the new block in the synchronization procedure does not incur much traffic due to the fact that mainstream blockchains do not deliver a high throughput of transactions (e.g., one block every 10 minutes in Bitcoin network, one block every 10-15 seconds in Ethereum,
    in contrast to hundreds or thousands of transactions per second in databases).
    The network is usually not the bottleneck in public blockchains.
    However, in private blockchains\footnote{A private blockchain is open to only authenticated users.} whose consensus largely depends on fault-tolerance protocols like Practical Byzantine Fault Tolerance (PBFT)~\cite{castro_pbft02}, 
    the large number of messages can easily saturate the bandwidth of the network~\cite{blockbench_sigmod17}.
\end{itemize}

In summary, blockchains deployed to conventional system infrastructure, usually high-end workstations and cluster of servers, 
do not experience much pressure on the storage capability,
but are limited by either computation (public blockchains) or networking (private blockchains).
Storage, however, becomes edge computing's top technical barrier before it can leverage blockchains.
DEAN is designed in such a way that these resource requirements are relaxed without compromised security.

\section{Decentralized-Edge Autonomous Network}
\subsection{The Blockchain System Model}
In this section, we elaborate on the design of the proposed blockchain system. We first introduce some essential terms and components. Then, we present an overview of three core components of the DEAN protocol that work in parallel. Finally, we explain the protocol workflow in detail.

\textbf{Nodes}. Our proposed system consists of two types of nodes, along with the central data center: (i) edge nodes (ii) sensors. The edge nodes have more storage capacity and computational power compared to the sensors. The edge nodes disseminate, verify (i.e., mining), and store blocks in the blockchain, while the sensors can only send requests to the edge devices for validation. Each node has private and public keys to authenticate itself. A node has a unique hash id that helps others to identify a node. Other nodes in the network can verify the identity of a node (i.e., hash) by its public key. Each node will earn an incentive after mining a block. For simplicity, we assign one "coin" as an incentive for mining a block.

\textbf{Block}. In a block, we store some essential elements in addition to the typical components like other traditional blockchains. 
A block consists of the previous parent block's hash (\textit{pHash}), current hash (\textit{cHash}), timestamp, timestamp hash (\textit{tHash}) to identify miner, transaction list (\textit{txnList}), and hash pointers of relocated nodes (\textit{rList}) where a block is relocated due to disk overload. Besides, each block keeps the records of the creator node's hash, pointers to the relocated nodes where a node re-transfers a block during the data dissemination phase, and the newly created hash of the block in the individual re-located node.  We will discuss the tasks of the essential new elements in detail in Section~\ref{subsec:dean_protocol}. 

\textbf{Transactions}. Sensors will generate data and upload it as a transaction to the blockchain. The typical example of a transaction is a fund transfer in mobile payment or business data. In general, we have two types of transactions: (i) regular transactions, (ii) configuration transactions. Both edge devices and sensors can issue regular transactions (e.g., fund transfer). In contrast, the configuration transactions are only issued by the edge nodes to control the different conditions of the system, such as endorsement of new nodes, approval for a transaction, or removing a dis-honest node from the network. Both the regular and configuration transactions will carry geographic information.

\subsection{DEAN Components}
In this section, we will explain the three parallel core elements of DEAN protocol, as shown in Figure~\ref{fig:3tier_arch}, and then we discuss the detailed workflow of the protocol in the following section.

 \begin{figure}[!t]
 	\centering
 	\includegraphics[width=75mm]{./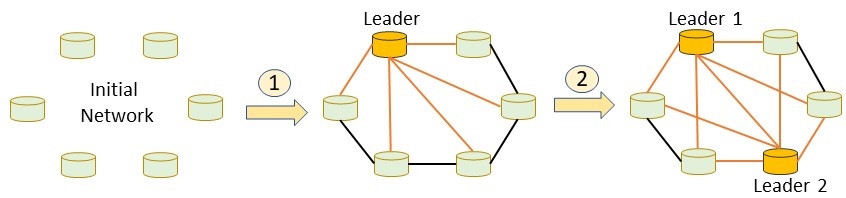}
 	\caption{
 	 Leader selection process by DEAN. }
 	\label{fig:leader-selection}

\end{figure}

\subsubsection{Build Initial Network}
To avoid the problem in the conventional leader selection process (i.e., proof-of-stake), we introduce a technique "\textit{build-initial-network}" in the DEAN protocol that operates in the first phase.
As shown in Figure~\ref{fig:leader-selection}, in the first phase edge nodes participate in a small network to find out the most trustworthy leaders among the edge nodes. The initial participants are sorted based on (i) timestamp, (ii) geographic information, and (iii) location. For this initial verification of the trustworthiness, nodes need to share their location and timestamp periodically. Therefore, if the initial network size is large, then DEAN picks a few nodes based on the combined score of the factors.
During this phase, edge nodes will exchange messages about the final decision about the validity of a block. That is, an edge device validates a block and shares the decision with the peers after validation. A node includes the peer in the trusted adjacent node list if it detects the same decision for the block. 

The process of building the trusted adjacent list continues till at-least a specific node creates more than 50\% adjacency. If at least one node attains majority trusts (i.e., 51\% votes) and meets the other leader selection criteria (i.e., Table~\ref{tbl:atw}), the nodes stop to exchange messages and continue to the next phase where the leader can approve any new transaction. The adjacency list of a node is updated in the node's global account and shared among the peers. We will discuss this phase in detail in Protocol~\ref{alg:network}. Once the leader is selected during this initial phase, it (i.e., leader) also validates any new nodes that request to join the network through a selection procedure discussed in Protocol~\ref{alg:newnode}.

\subsubsection{Consensus Mechanism}
Due to the typical nature of the consensus protocols (i.e., compute-intensive, communi-cation-intensive, or Sybil attack due to initial token procurement), the traditional blockchain systems are not directly applicable in the resource-constrained edge computing environment. Thus, we need to build a protocol that holds both attributes: (i) reliability and (ii) scalability. The DEAN consensus mechanism consists of two steps: (i) build an initial network that finds most trustworthy nodes based on the highest adjacency, (ii) select the leader node based on some essential properties shown in Table~\ref{tbl:atw} developed during the first step. The properties such as weight of the adjacency list, timestamp, and location help in determining the trustworthiness(i.e., \textit{ATW} score) of an edge device. Once the leader is selected, they are solely responsible for mining new blocks. The leader also provides pre-approval for the new nodes to join the network. 

Multiple leaders can get selected based on the same score as shown in Figure~\ref{fig:leader-selection}. They can work in parallel to continue validating separate sets of blocks. However, DEAN ensures that no leader attempts to mine the same block based on a locking mechanism on a block that ensures atomic operation.  Each block has a temporary hash built with the timestamp assigned by a leader. Therefore, any node can verify who is the first or original miner of the block from the temporary hash. The consensus mechanism will be discussed in more detail in Protocol~\ref{alg:consensus}.


\subsubsection{Data Distribution}
In a traditional blockchain system, all the blocks are shared and replicated among all the nodes in a network. Due to the storage constraint in the edge devices, the traditional mechanism will be impractical to apply in edge computing. The blockchain data in edge devices should be organized following a smart sharding mechanism that allows storing blocks in a distributed manner among the nodes with maximum adjacency that will allow: (i) in smart data balancing and (ii) quick processing of any transaction request. That is, replication process should be as much as independent of the block validation process to accelerate the quick processing of transaction.

The proposed data dissemination technique in the DEAN protocol offers a practical approach to manage the blockchain data. First, it allows storing blocks only within the most trustworthy adjacent nodes of a leader. The same properties (i.e., adjacency list, timestamp)  presented in Table~\ref{tbl:atw} can determine the trustworthiness level of a node. Second, due to storage limitation, if an edge device encounters disk overload, it further disseminates the oldest blocks to the fellow nodes to get more space in the disk. We will discuss the data dissemination technique in detail in Protocol~\ref{alg:balance}.

\begin{table}[!t]
	\caption{ATW Table Structure}
	\centering
	\begin{tabular}{ lll }
	\toprule
		\textbf{Field} & \textbf{Description}	\\
	\midrule
	nodeID & Unique identifier of a node \\	\hline
    
    timestamp &  Timestamp logs of a node\\	\hline
    geoTimer &  Geographic timer of a node's activity \\	\hline
    loc &  Location of a node \\	\hline
    adj & List of neighbour nodes \\	\hline
     mList & Total blocks mined by a node \\	\hline
     bList & Total blocks stored in a node\\	\hline
     disk & Free disk space in a node\\
    \bottomrule
	\end{tabular}
    \label{tbl:atw}
    \figspace
\end{table}

\subsection{Protocols}
\label{subsec:dean_protocol}
Our proposed consensus protocol, 
namely Decentralized-Edge Autonomous Network (DEAN), which consists of four sub-protocols. 
The first protocol, \textit{Distributed Network Construction}, aims to 
find leaders from the edge devices by developing maximum trustworthiness.
The second protocol, \textit{DEAN-Consensus}, pushes leader nodes into the consensus: whenever a new block arrives, the edge nodes (i.e., leaders) with maximum adjacency (i.e., at least 51\%) validates it. The third protocol, \textit{New-Node-Approval}, further extends the first protocol that allows edge nodes to extend the blockchain network by approving new nodes to join. Finally, the fourth protocol, \textit{Memory-Balance} assists with data distribution and recovery in case of edge devices failure (e.g., storage overload, power failure); hence, it helps in the successful continuation of the block mining process.

\begin{algorithm}[!t]
\floatname{algorithm}{Protocol}
	\caption{\texttt{Distributed-Network-Construction}}
	\label{alg:network}
	\begin{algorithmic}[1]
		\Require 
		$N_i$ the total initial edge nodes;
		$b$ the new block; 
		Edge nodes $E$ where the $i$-th node is $E^i$;
		$E^i_B$ the blockchain copy on $E^i$;
		$E^i_N$ the adjacent list on $E^i$;
		$n$ the new node;
		$f$ the assumption of total faulty nodes.
		\Ensure $b$ is appended in 51\% edge nodes after validating with $E_B$'s and at-least an edge node $E$ grows 51\% adjaceny.
		
		\Function{Verify-Block}{$b$, $E$, $N_i$}

		  \For {$E^i \in E$} 
		    \While {$length(E_N^i) <=\frac{N_i}{2}$}
		        \If{$E^i$ validates $b$ $\And$ $b \not\in E_B^i$}
		            \State $E_B^i \gets E_B^i \cup b$
		            \State Send block $b$ to peers
		        \EndIf
		     \EndWhile
		  \EndFor
		   
	    \EndFunction
	    
	    \Function{Build-Network}{$b$, $E^j$, $N_i$}
		   \State $N_i \gets 2f+1$ 
		   \While {$length(E_N^i) <=\frac{N_i}{2}$}
		        \While {block arrives from $E^j$ $\And$ $E_N^i \not\in E^j$}
		         
		            \If{$Verification$ $Matches$}
		                \State $E_N^i \gets E_N^i \cup E^j$
		            \EndIf
		        \EndWhile
		    \EndWhile
	    \EndFunction

    \end{algorithmic}
\end{algorithm}

\textbf{Distributed-Network-Construction.}
The Protocol~\ref{alg:network} explains how to build the very initial trustworthy network.
In essence, at the very early stage, a batch of new transactions is packed into a block and eventually is propagated among the edge nodes. As shown in Line 2, each node attempts to validate and persist the block. After validation, the node shares the decision with the peers (Line 6). In this phase, each node acts as a sender and receiver. When a peer node receives the decision (Line 14) and finds the same result with its validation (Line 15), it adds the sender node into its adjacency list as a trusted peer, as shown in Line 16.  This entire process continues until a node reaches the minimum limit (i.e., 51\% adjacency). Once a node reaches the minimum limit, it earns the pre-qualification to become a leader. 

In our blockchain system, each node will have a \textit{ATW} Table~\ref{tbl:atw} that is shared and updated among the peers periodically. The table holds the most recent status of the different attributes (e.g., adjacency, timestamp, total mined blocks) of a node. Therefore, if a node becomes trustworthy and gets compromised at a certain point, the majority of the fellow nodes already have the updated attributes that help in verifying the messages sent from the compromised node. However, if a node denies sharing its \textit{ATW} table status with other nodes within a fixed clock period, the node is broadcast as a faulty one, and no further communication is allowed from it.

\begin{algorithm}[!t]
\floatname{algorithm}{Protocol}
	\caption{\texttt{DEAN-Consensus}}
	\label{alg:consensus}
	\begin{algorithmic}[1]
		\Require 
		$N_i$ the total initial edge nodes;
		$b$ the new block;
		$b_{tHash}$ the unique identifier of first miner of the block;
		Edge nodes $E$ where the $i$-th node is $E^i$;
		Leader nodes $L$ where the $i$-th node is $E_l^i$;
		$E^i_B$ the blockchain copy on $E^i$;
		$E^i_N$ the adjacent list on $E^i$;
		$E^i_{atw}$ the ATW value of $E^i$;
		$max_{atw}$ the maximum ATW value in a network.
		\Ensure Multiple $b$'s are validated by $E_i$'s holding largest stake.
		
		\Function{Find-Leader}{$b$, $E$, $N_i$}
          \While {new block arrives $\And$ $length(E_N^i) >\frac{N_i}{2}$}    	
              \State $L \gets \emptyset$
              \State Pick the node $E^i$ with maximum $ATW$
              \State $max_{atw} \gets E_{atw}^i$
              \State $L \gets L \cup E^i$
              
              \While {$L$ is not $\emptyset$}
                \State $L \gets L - E_l^i$ \Comment{lock the leader}
                \State Compute $b_{tHash}$
                \State Call $CONSENSUS(b, E_l^i)$
                
              \EndWhile
          \EndWhile  
        \EndFunction    
        \Function{Consensus}{$b$, $E_l^i$}     
		        \State $E^i \gets E_l^i$
		        \If {$E_{atw}^i$ $>=$ $max_{atw}$ $\And$ $b \not\in E_B^i$}
		            \If{$E_i$ validates $b$}
		                \State $E_B^i \gets E_B^i \cup b$
		                \State Forward block to adjacent peers to replicate
		                \State $L \gets L \cup E_l^i$ \Comment{Release the leader}
		            \EndIf
		           
		        \EndIf

	    \EndFunction
	    \end{algorithmic}
\end{algorithm}

\textbf{DEAN-Consensus.}
DEAN-Consensus is comprised of two steps, as shown in Protocol~\ref{alg:consensus}. First, the nodes with maximum adjacency (i.e., at least 51\%) will be chosen for the leader selection competition, as shown in Line 2. Second, the properties in Table~\ref{tbl:atw} for each node will be computed to sort the edge devices with the highest trustworthiness (Line 4). The node with the highest ATW value will be elected as a leader (Line 6). The high ATW score indicates that the node that achieves the highest adjacency, does not change geographic location often, active for certain hours, and has a certain amount of free disk space, can qualify to be a leader. The leader earns the maximum trust from the peer nodes; hence, it (i.e., leader) is allowed to validate any new block, as shown in Line 17. The leader forwards the block to the peers to replicate the copy after validation (Line 19). 

If multiple edge devices achieve the same ATW score, the system will have multiple leaders. Each leader can operate on separate sets of blocks in parallel to maximize the throughput. However, it is important to track that multiple leaders are not working on the same block. To avoid the collision, we introduce a lock-based block assignment mechanism that ensures no leader spends time on validating the same sets of blocks. We generate a timestamp-based hash (i.e., tHash) for a block once it is assigned to a leader. When an edge node receives a block from a leader, it can validate the original miner based on the timestamp hash (i.e., tHash).

The benefits of the new consensus protocol are four folds:
(i) It fastens the parallel block validation process through multiple trusted leaders elected through at least 51\% votes (i.e., adjacency) in large scales of nodes,
(ii) It makes the validation process independent of the \textit{Build-Network} phase because the leaders can continue validating the blocks while the other nodes are building the adjacency,
(iii) More reliable data persistence is achieved by the agreement from the most trustworthy edge nodes (i.e. leaders), and 
(iv) It opens the road to reducing the network traffic across nodes in edge computing because leader nodes store the blocks among the nearby peers.

\begin{algorithm}[!t]
\floatname{algorithm}{Protocol}
	\caption{\texttt{New-Node-Approval}}
	\label{alg:newnode}
	\begin{algorithmic}[1]
		\Require 
		$N_i$ the total initial edge nodes;
		$b$ the new block; 
		Edge nodes $E$ where the $i$-th node is $E^i$;
		$E^i_B$ the blockchain copy on $E^i$;
		$E^i_N$ the adjacent list on $E^i$;
		$n$ the new node;
		$n_s$ the trustworthiness flag of new node;
		$n_h$ the hash of new node.
		\Ensure $n$ is verified by leader nodes before it is added to network.
		
	    \Function{Join-NewNode}{$b$, $E$, $N_i$, $n$}
		    \State $n_s \gets False$
		    \For {$E^i \in E$}
		        \If{$n_s$ is False}
		        
		            \If{$length(E_N^i) >=\frac{N_i}{2}$}
		                \State $n$ request a block $b$  from $E^i$
		                \If{$E^i$ verifies $n_h$}
		                    \State $E^i$ forwards block $b$
	    	                \If{$n$ validates $b$}
		                        \State  $E_N^i \gets E_N^i \cup n$
		                        \State $E^i$ earns joining fee from $n$
		                        \State $n_s \gets True$ \Comment{$E^i$ broadcasts $n$}
		                    \EndIf
		                \EndIf
		            \EndIf
		        \Else
		            \If{$E^i$ verifies $n_h$} 
		                \State $E_N^i \gets E_N^i \cup n$
		            \EndIf
		      \EndIf
		    \EndFor
	    \EndFunction
    \end{algorithmic}
\end{algorithm}

\textbf{New-Node-Approval.}
Avoiding the unnecessary communication overhead during the \textit{Build-Network}, along with the growth of the network, is a critical issue. In DEAN, we propose a mechanism to address this point, as shown in Figure~\ref{fig:new-node}. Once the initial leaders are elected, during the very first phase (i.e., \textit{Build-Network}), the new nodes do not need to communicate extensively with all the nodes to develop trust (i.e., 51\% adjacency). Instead, they can join the blockchain network based on the vote from a leader as shown in Protocol~\ref{alg:newnode}. That is, new nodes will first try to achieve trust from the leaders. As shown in Line 5, the new node will try first to achieve the trust of a qualified leader before joining the network. The leader forwards a block to the new node to verify (Line 8). The new node then shares the result with the leader after validating the block. The leader approves the node in the network if the result is valid (Line 10) and earns a joining fee (Line 11). 

The new node continues to grow its adjacency with all the leaders eventually. As shown in Line 12, each leader broadcasts the status of a new node, after verifying and adding it to own adjacent list. When other edge devices find majority votes from the leaders, they automatically add the new nodes into their trusted adjacent list. However, the leader pre-verifies the new node before it (i.e., leader) forwards a block, as shown in Line 7. 

 \begin{figure}[!t]
 	\centering
 	\includegraphics[width=55mm]{./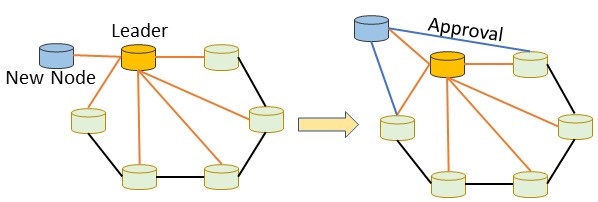}
 	\caption{
 	 New node approval process by DEAN. }
 	\label{fig:new-node}
 \end{figure}



\begin{algorithm}[!t]
\floatname{algorithm}{Protocol}
	\caption{\texttt{Memory-Balance}}
	\label{alg:balance}
	\begin{algorithmic}[1]
		\Require 
		Edge nodes $E$ where the $i$-th node is $E^i$;
		Adjacent nodes list $E_a$ with available disk where the $i$-th node is $E_a^i$;
		$E^i_B$ the blockchain copy on $E^i$;
		$E_{disk}^i$ the disk space on $E^i$;
		$E^i_{BT}$ the side blockchain of $E^i$;
		$E^i_{bp}$ the block pointer to adjacent edge node;
		$b$ the new block; 
		$b_h$ the hash of a block;
		$b_t$ the flag to decide block location;
		$t_h^b$ the hash pointer of adjacent edge node.
		\Ensure Edge nodes $E_i$'s share block with closest adjacent nodes to save disk space.
		
		\Function{Dissemination}{$b$, $E$}
          \If{$E_{disk}^i >= 51\%$} 
            \State $A \gets$ closest neighbours with available disk
            \While{$A$ is not $\emptyset$}
                \State $b_t \gets True$ \Comment{Block for side chain}
                \If{$E^j \not \in E_a$} \Comment{Not in adjacent nodes}
                    \State $t_h^b \gets RECEIVEBLOCK(b,E^j)$
                    \If{$t_h^b$ is valid}
                        \State $E_{bp}^i \gets t_h^b$ \Comment{Keep pointer}
                        \State Empty the data from block $b$
                        \State Split rewards with $E^j$
                    \EndIf
                    
                \EndIf
                \State $A \gets A - E^j$
            \EndWhile
          \EndIf
	    \EndFunction
	    \Function{ReceiveBlock}{$b$, $E^j$}
          
            \While{Block $b$ arrives}
                \If{$b_t$ is $True$}
                    \If{$b \not \in E_{BT}^j$}
                        \If{$b_h$ confirms source}
                            \State Compute $t_h^b$
                            \State $E_{BT}^j \gets E_{BT}^j \cup b$ \Comment{Add to side chain}
                            \State Return $t_h^b$
                        \EndIf
                    \EndIf
                \EndIf
                
            \EndWhile
         
	    \EndFunction
	    \end{algorithmic}
\end{algorithm}

\textbf{Data Distribution and Recovery.}
We explain the process of data dissemination and recovery in case of disk overload in a node in Protocol~\ref{alg:balance}. As shown in Line 2, an edge node can start disseminating the oldest blocks to nearby fellow nodes if the disk space is occupied by 51\%. 
It should be noted that at the early stage (i.e., Protocol~\ref{alg:consensus}), after validating a block, each leader stores the hash pointers (i.e., \textit{rList}) of adjacent nodes before disseminating the block to the fellow nodes.  Therefore, to disseminate the old blocks, the edge nodes select only those fellow nodes that are not available in the \textit{rList}, as shown in Line 6. The sender node sets the block relocation flag before disseminating (Line 5) to distinguish between a regular block and an old block. The receiver node will double-check first if the block is sent for relocation purposes (Line 20) and is not already stored in the local blockchain (Line 21). Before storing the block in the disk, the receiver node will first verify the source node's hash, as shown in Line 22. If the hash is valid, the node will compute a new hash (i.e., $t_b^h)$) (Line 23) for the relocated block and store it (i.e., block) with the current hash (i.e., \textit{cHash}).

We use a secured side chain in each node to store the relocated blocks to keep them separate from the regular (i.e., already mined) blocks. The blocks in a side chain can always be verified from the \textit{rList} stored in the source node's hollow block. The node will store the relocated block in a side chain (Line 24) before responding to the sender node (Line 25) with the new relocated hash (i.e., $t_b^h)$). The sender node will store the new hash in the block's relocation hash pointer list (i.e.\textit{rList}), as shown in Line 9. Finally, the sender node prepares a hollow block case for the transferred block. That is, it (i.e., sender) keeps only all the hash values of the block for future quick data recovery purposes and erases the data content from the disk to make more space (Line 10). As the receiver node helps in relocating the block, it will split the reward for the block with the sender node, as shown in Line 11.

\subsection{Safety}
\label{subsec:dean_correctness}
This section theoretically proves the safety of the DEAN protocol;
that is, the entire system will keep intact if more than 50\% edge nodes are not compromised. This is because, as long as at least 51\% nodes are not compromised, the leaders are not compromised.
In other words, the DEAN protocol guarantees at least 51\% nodes reach consensus.
It should be noted that the protocol works against only Byzantine faults;
the fail-stop model is beyond the discussion of this paper.

\begin{theorem}
[Safety of DEAN] {Guarantee of at least $51\%$ nodes reach consensus.}
\label{thm:DEAN_correctness}

If a $|M|$-edge-sensor network is reliable to up to $\frac{|M|}{2}$ compromised nodes,
then the extended $(|M|+|L|)$-node network with $|L|$ leader nodes ($|L| \leq |M|$ and at least $\left(\frac{|M|}{2} + 1\right)$ $M$ are connected to $L$), 
is also reliable to up to $\frac{|M|+|L|}{2}$ compromised nodes if all nodes follow the DEAN protocol.
\end{theorem}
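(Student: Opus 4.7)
The plan is to decompose the compromised nodes into $c_M$ (in $M$) and $c_L$ (in $L$), assume only $c_M + c_L \leq (|M|+|L|)/2$, and show the combined network still reaches consensus under the DEAN protocol. First I would invoke the structural guarantee produced by Protocol~\ref{alg:network}: an edge-sensor in $M$ only adds a peer to its adjacency list after its local validation matches the peer's broadcasted decision, so any $E_l \in L$ can become a leader only if it has accumulated at least $|M|/2 + 1$ matching validations from $M$. This, together with the premise that at least $|M|/2 + 1$ nodes of $M$ are connected to $L$, is the structural hook everything else will hang on.

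Next I would prove by contradiction that a compromised $L$-node cannot sustain leader status. A compromised leader deviates on at least one block decision; by the Verify-Block routine, no honest $M$-node ever matches an inconsistent validation, so the compromised leader's adjacency can only be drawn from compromised $M$-nodes. Therefore, sustaining the $|M|/2 + 1$ threshold forces $c_M \geq |M|/2 + 1$, which I will then use as a pivot for the case split rather than as an immediate contradiction.

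The main step is then a clean two-case analysis on $c_M$. In the first case, $c_M \leq |M|/2$, so the hypothesis on the $|M|$-sensor network gives consensus inside $M$; every surviving leader is by the preceding argument honest, and Protocol~\ref{alg:consensus} propagates its validated block to the at-least-$(|M|/2+1)$ connected honest edge-sensors, so the combined $(|M|+|L|)$-node view agrees. In the second case, $c_M \geq |M|/2 + 1$, and the budget $c_M + c_L \leq (|M|+|L|)/2$ yields $c_L \leq (|L|-2)/2 < |L|/2$, so a strict majority of leaders are honest; combined with the prior claim that no compromised $L$-node can retain leader status, the honest-leader quorum still drives DEAN-Consensus on the combined network even though the isolated $M$-subnetwork has slipped below its standalone threshold. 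In both cases the combined network reaches consensus, which is precisely the $(|M|+|L|)/2$-reliability conclusion.

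The hardest part will not be the counting, but being precise about what ``reliable'' and ``reaches consensus'' mean in the combined asymmetric model where $L$-votes carry the leader-weighted role of Protocol~\ref{alg:consensus}. In particular, one must argue carefully that the ATW-based leader election plus the $|M|/2+1$-connectivity premise really do let an honest leader quorum override a compromised $M$-majority in Case~2; this is where the Build-Network filter (which blocks compromised leaders) has to be invoked as more than a static property, namely as an invariant that persists throughout execution because \textit{ATW} tables are refreshed and shared within a bounded clock period. Once this persistence is stated cleanly, the theorem follows from the case split above.
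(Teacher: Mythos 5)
Your setup (splitting the compromised set into $c_M$ and $c_L$) and your structural lemma (a leader only attains and keeps its status via at least $\frac{|M|}{2}+1$ matching validations from $M$, so compromising a leader forces the compromise of a majority-sized cluster of edge nodes) are exactly the engine of the paper's proof. The paper runs a single contradiction: it takes $c_M \leq \frac{|M|}{2}$ as given by the hypothesis, deduces $c_L \geq \frac{|L|}{2}+1$, and then counts that compromising that many leaders requires at least $\frac{|M|}{|L|}\cdot\left(\frac{|L|}{2}+1\right) = \frac{|M|}{2}+\frac{|M|}{|L|} \geq \frac{|M|}{2}+1$ compromised edge nodes (using $|L|\leq|M|$), contradicting the cap on $c_M$. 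Your Case~1 together with your structural lemma reproduces this.

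The genuine gap is your Case~2. First, it is internally inconsistent: your own pivot lemma says a compromised leader can sustain its status precisely when $c_M \geq \frac{|M|}{2}+1$, yet in Case~2 that condition holds by assumption, so you cannot then invoke ``no compromised $L$-node can retain leader status'' --- the lemma's protection evaporates exactly in the case where you need it. Second, the substantive claim that an honest leader majority can ``override a compromised $M$-majority'' is unproven and almost certainly false in this protocol: leader authority in DEAN is entirely derivative of edge-node endorsement (adjacency lists, \textit{ATW} tables, and block validations all originate in $M$), so once more than half of $M$ is compromised, the adversary controls leader election and block validation regardless of how many $L$-nodes remain honest. The paper sidesteps this by reading the hypothesis as a standing bound $c_M \leq \frac{|M|}{2}$, which makes your Case~2 vacuous. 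If you adopt that reading, your Case~1 plus the cluster-counting lemma is a complete proof and matches the paper; if you insist on budgeting only the total $c_M + c_L \leq \frac{|M|+|L|}{2}$, then Case~2 is load-bearing and your argument for it does not go through.
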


\begin{proof}

We will prove this by contradiction. 
That is, we assume there were $F = \left(\frac{|M|+|L|}{2} + 1\right)$ compromised nodes in the extended $\left(|M|+|L|\right)$-node network in the following.
We will show that this assumption will lead to a contradiction to the conditions and assumptions stated by the theorem.

Because there are at most $F_m = \frac{|M|}{2}$ compromised nodes from the original $|M|$-edge-sensor network as leader $|L|$ is selected based on at least $\left(\frac{|M|}{2} + 1\right)$ votes from edge nodes,
at least $F_s = F - F_m = \left(\frac{|L|}{2} + 1\right)$ compromised nodes have to come from $L$.
According to Line 2 in Protocol~\ref{alg:consensus}, 
there must be at least $\left(\frac{|M|}{2} + 1\right)$ clusters of edge nodes that have been compromised.
According to the assumption that at least 51\% edge nodes are connected to each leader node $|L|$, 
every cluster comprises $\frac{|M|}{|L|}$ edge nodes on average.
As a consequence, there must be at least $F_m^*$ compromised edge nodes,
where
\[
F_m^* = \frac{|M|}{|L|} \cdot \left(\frac{|L|}{2} + 1\right) = \frac{|M|}{2} + \frac{|M|}{|L|}
\]
Note that the theorem assumes that the number of edge leaders is always smaller than that of edge nodes, i.e., $|L| \leq |M|$.
Therefore, we have $F_m^* \geq \frac{|M|}{2} + 1$.
However, the theorem also states that the original $|M|$-edge-sensor network cannot have more than $F_m \leq \frac{|M|}{2}$ compromised nodes, as leader $|L|$ is selected based on $\left(\frac{|M|}{2} + 1\right)$ votes.
Therefore, the fact that we conclude with $F_m^* > F_m$ leads to a contradiction.
\end{proof}

\subsection{Liveness}
\label{subsec:dean_progress}
This section explains in details why DEAN protocol is non-blocking.
That is, no fail-restart nodes would block the execution of the voting procedure.
In other words, as long as the failed nodes can eventually recover (even in a different state than where it fails),
the consensus is either \textit{reached} or \textit{cancelled}:
there is no ``partial consensus''.
In the context of transactions, it is also referred to as \textit{commit} and \textit{abort}.
As demonstrated in~\cite{stonebreaker_3pc83},
there are two necessary and sufficient conditions to ensure a non-blocking protocol:
\begin{enumerate}
    \item[C1] There is no such a state from which we can make a decision between commit and abort. 
    \item[C2] There is no direct link between an undeterministic\footnote{An undeterministic state is defined as a state from which no final decision can be made.} state and a commit state.
\end{enumerate}

To see how C1 is satisfied, let's assume the leader node $P$ who initiates and validates the transaction fails after $(m-1)$ adjacent nodes have validated $P$'s request.
That is, total $m$ nodes, including $P$ itself, have verified the transaction before $P$ is failed and restarted.
There are two scenarios to consider:
\begin{itemize}
    \item If $m <= \frac{N}{2}$, then $P$ is locked and can not start new transaction processing until it finishes transferring the transaction to adjacent nodes (i.e., at least $\frac{N}{2} + 1$) according to Line 19 of Protocol~\ref{alg:consensus}. 
    In other words, the state always indicates an \textit{abort},
    and the restarted $P$ node will simply resend the transaction to other nodes to verify and vote.

    \item If $m >= \frac{N}{2} + 1$, then lock on $P$ is released and should have persisted the change to the disk according to Line 20 of Protocol~\ref{alg:consensus}.
    In this case, when $P$ restarts, it will lead to a \textit{commit} status for sure.
\end{itemize}
Therefore, we see that no matter how many nodes have verified the transactions requested by $P$ before the latter fails, 
for each possible case there is only one possible outcome.
That is, we never need to decide a commit or abort operation given a specific case.
C1 is thus satisfied.

It is straightforward to verify C2 because DEAN does not exhibit an undeterministic state. 
Once $P$ restarts, it simply checks whether the persistent storage has the transaction written.
If so, $P$ will mark the transaction completed---a ``commit'' state;
otherwise, $P$ will restart the mining procedure and resend everything---an ``abort'' state.
C2 is thus satisfied.

\section{Evaluation}
\label{sec:eval}


\subsection{Experimental Setup}



We have implemented a prototype system of the proposed DEAN blockchain architecture and consensus protocols with Python and 
deployed to a Mac workstation with Intel Core-i7 4.2 GHz CPUs along with 32 GB 2400 MHz DDR4 memory. We simulate multiple nodes with Docker~\cite{docker} that supports creating multiple nodes distributedly.
Most of the experiments were carried out on 100 nodes while the largest scale is up to 1,000 nodes.
The ratio between edge and sensor nodes is set to 1:3 by default,
i.e., $\frac{|S|}{|M|} = 3$. We assume all nodes are distributed in an area of 600m $\times$ 600m to 6000m $\times$ 6000m based on different scales.
The main workload under evaluation comprises 2.8 million queries similar to \textit{YCSB}~\cite{ycsb}.
\textit{YCSB} is widely accepted in measuring the performance of blockchain systems (e.g., BlockBench~\cite{blockbench_sigmod17}).

\subsection{Resilience}
\label{sec:eval_acc}

This section demonstrates that consensus can be reached by the proposed protocol.
Note that, we have proven its safety in Section~\ref{subsec:dean_correctness};
we hereby experimentally verify the conclusion in real-world applications.
The approach we take is to feed the system with a large number of random transactions extracted from YCSB~\cite{ycsb} and let the system run for 180 minutes in a 1000-node cluster. We repeated the experiments for 6 times. During each experiment we randomly turn off a number of edge nodes, assuming the turned off nodes are malicious.

\begin{figure}[!t]
	\centering
	\includegraphics[width=65mm]{./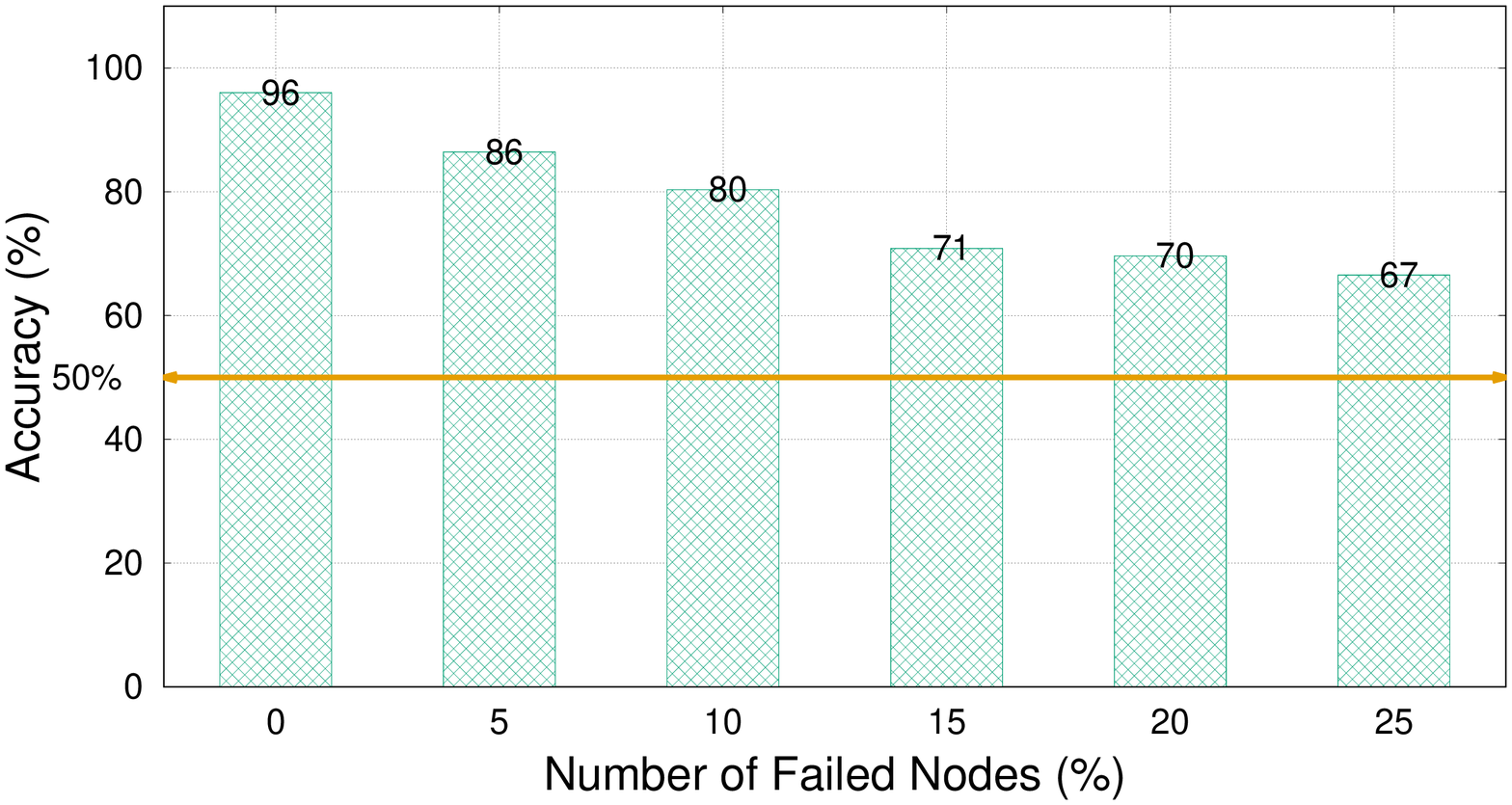}
	\caption{\textbf{DEAN's Resilience.} 
	DEAN guarantees that more than 50\% of nodes hold valid blockchain in practice. }
	\label{fig:accuracy}
	\figspace
	\figspace
\end{figure}



We report the results in Figure~\ref{fig:accuracy},
which shows that more than $60\%$ nodes hold valid blockchains even with $25\%$ node failures.
Note that, by definition, a consensus is reached by having more than $50\%$ of nodes holding valid blockchains.
This experiment demonstrates that the real quorum in practice is far more than 50\%. 
Therefore, in our earlier proof~\ref{subsec:dean_correctness}, 
the lower bound 50\% could have been elevated.
Since we are interested in only the safety of the protocol at this point,
it is sufficient to show that 50\% quorum is reached---finding out the minimal quorum (possibly higher than 50\%) is beyond the scope of this paper.
This result, however, suggests that we might be able to further trade the additional 10\% quorum for even higher performance.

Another important phenomenon is that a larger portion (67\%) of nodes hold the correct chain.
We did not conduct any theoretical analysis on how many nodes could achieve unanimous agreement (i.e., 100\% agreement);
however, this experiment demonstrates that a significant portion (67\% in this case) of nodes might not experience the so-called ``divergence'' problem---referring to that subsets of nodes (less than 50\%) not holding the valid blockchain.
Therefore, there might be some room in the DEAN protocol to further relax the constraints,
possibly leading to higher performance.

\subsection{Throughput}
\label{sec:eval_throughput}

We report the throughput of the DEAN-based blockchain system prototype and compare its performance to other leading blockchain systems: Ethereum~\cite{ethereum}, Parity~\cite{parity} and Hyperledger~\cite{hyperledger}. 
We measure the performance of DEAN on up to 32 nodes (the ratio of edge and sensor nodes is 1:3) over the period of five minutes, 
where each senor node issues up to 10,000 queries per second and each block contains more than 12 transactions.
The workload is similar to~\cite{ycsb} used for other blockchain systems~\cite{blockbench_sigmod17}.

\begin{figure}[!t]
	\centering
	\includegraphics[width=65mm]{./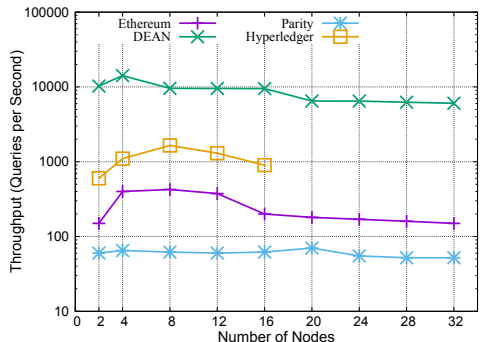}
	\caption{\textbf{Throughput Comparison between DEAN and state-of-the-art blockchain systems.} 
	DEAN outperforms major blockchain systems with orders of magnitude higher throughput.}
	\label{fig:blockchainthroughput}
	\figspace
	\figspace
\end{figure}

Figure \ref{fig:blockchainthroughput} reports that DEAN-based system outperforms all other systems in terms of the throughput. DEAN provides up to $88.8\times$, $16.6\times$ and $6.7\times$ more throughput compared to Parity, Ethereum and Hyperledger, respectively. 
Although both DEAN and Ethereum are derived from PoW, 
DEAN exhibits significantly higher throughput because it could shorten the puzzle-solving time without compromising the security guaranteed by the protocol,
which exploits the unique property in edge computing where edge nodes are highly reliable.

Hyperledger delivers much higher throughput than Parity because Hyperledger is not based on PoW, 
but relies on leader selection protocol (practical Byzantine fault tolerance, PBFT),
whose bottleneck lies on the network rather than the computing time.
That is also why Hyperledger shows a poor scalability in literature~\cite{blockbench_sigmod17} (usually not scalable beyond tens of nodes).
DEAN, in contrast, not only achieves higher throughput than Hyperledger (thanks to the low-difficulty computation time), but also scales the throughput beyond 16 nodes---the known limit for Hyperledger~\cite{blockbench_sigmod17}.

Parity delivers the lowest throughput among the four systems under comparison.
Parity's consensus is based on a simplified version of Proof-of-Stake (PoS),
which is obviously inappropriate to edge computing because it is difficult to estimate the ``stake'' of the edge and sensor nodes as they may come and go at arbitrary times.
Performance-wise, this experiment suggests that DEAN is a preferable protocol for edge sensors and edge computing applications.

\subsection{Latency}
\label{sec:eval_latency}

Similarly to the previous section, we compare DEAN's latency with Ethereum, Parity, and Hyperledger, respectively. 
The latency here in this context refers to the latency from the perspective of the entire blockchain system, rather than the conventional per-block latency. 
The latter is less interesting in this context because per-block performance is insufficient to fully describe the performance characteristic of the entire blockchain system,
whereas the aggregate block-appending time represents the system latency as a whole.

\begin{figure}[!t]
	\centering
	\includegraphics[width=65mm]{./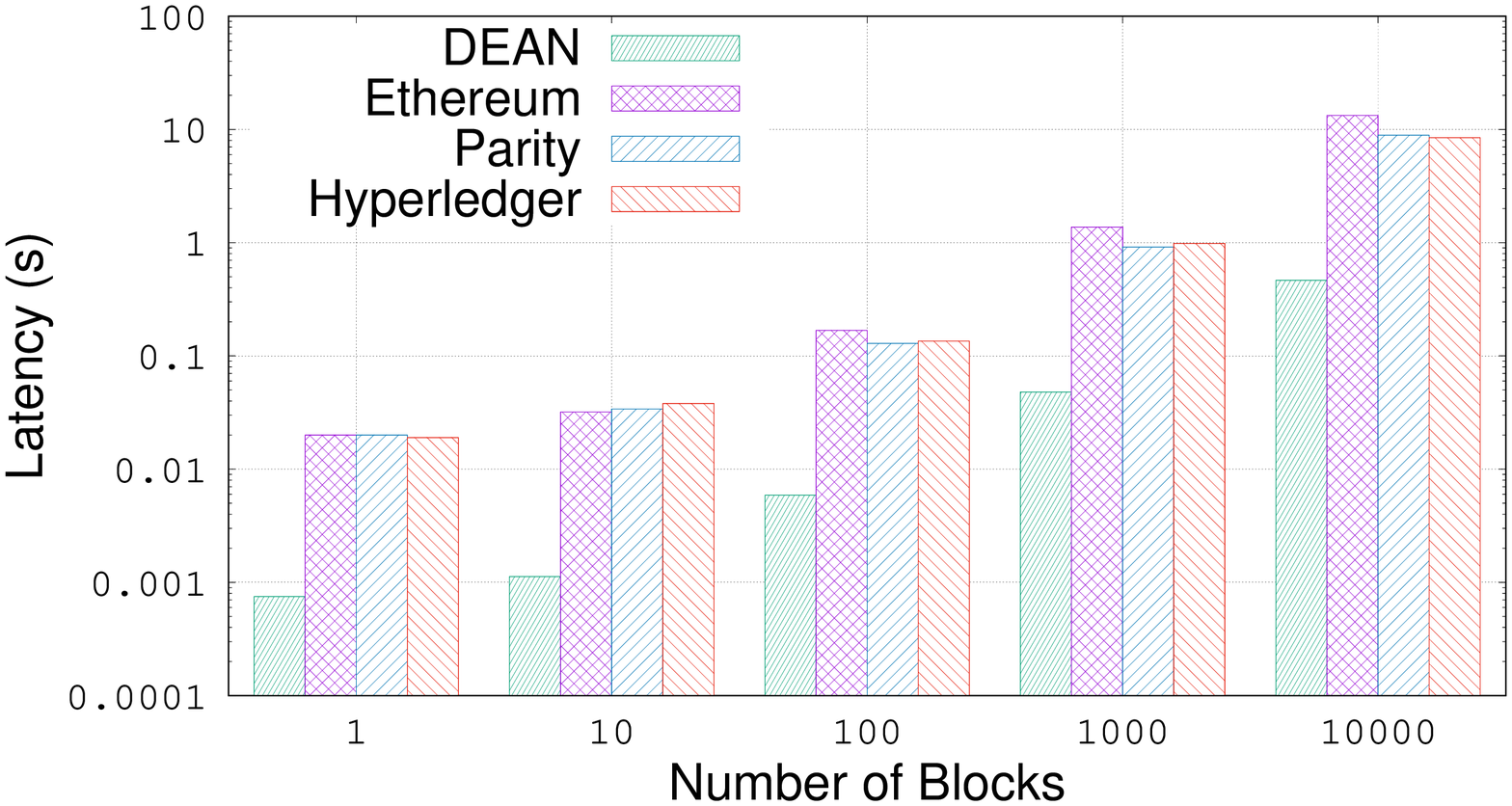}
	\caption{\textbf{Latency Comparison between DEAN and state-of-the-art blockchain systems.}
	DEAN is orders of magnitude faster than others and incurs only sub-second for adding 10,000 blocks.}
	\label{fig:blockchainlatency}
	\figspace
	\figspace
\end{figure}

As shown in Figure~\ref{fig:blockchainlatency},
DEAN incurs the lowest latency when appending various numbers of blocks ranging from one to 10,000.
In particular, for appending 10,000 blocks, 
DEAN takes only 0.5 second while Ethereum takes 13 seconds, 
leading to $26\times$ speedup;
DEAN is also at least 10$\times$ faster than both Parity and Hyperledger.

It should be noted that in DEAN all the blocks contain more than 12 transactions, 
whereas each block generated in the experiments demonstrated at~\cite{blockbench_sigmod17} contains only three transactions.
Therefore, if we reduce the workload of DEAN,
the latency gap would have been larger.
In our current implementation of DEAN's system prototype,
the number of transactions per block is hard-coded.
Future releases will allow users to adjust the transaction density---the maximal number of transactions allowed in a single block (as long as size allows). 

\subsection{Scalability}
\label{sec:eval_scalability}
In this section, we report DEAN's performance at various scales on up to 1,000 nodes. 
To the best of our knowledge, little literature exists for the mainstream blockchain systems and protocols at such scales.
In fact, it is well-accepted that existing PoW, PoS, and PBFT blockchain protocols are not scalable~\cite{kzhang_icdcs18}.
For this reason, we did not compare DEAN to other blockchains in terms of scalability.

\begin{figure}[!t]
	\centering
	\includegraphics[width=65mm]{./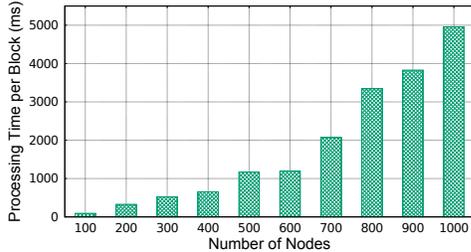}
	\caption{\textbf{Scalability of DEAN's Processing Time.}
	DEAN takes less than five seconds for appending a new block on extreme scales of 1,000 nodes.}
	\label{fig:scalability}
	\figspace
	\figspace
\end{figure}

Figure~\ref{fig:scalability} reports the time taken by DEAN for processing a single block at different scales ranging from 100 to 1,000 nodes. 
On 100 nodes, DEAN needs less than one millisecond for processing a block. 
On 1,000 nodes, the protocol can still deliver a new block within five seconds.
Note that in our experimental setup each block comprises 12 transactions,
implying that the overall system is able to process multiple transactions each second,
which is on par with the largest production blockchain system Bitcoin~\cite{bitcoin}.
Indeed, Bitcoin consists of about 10,000 nodes globally~\cite{bitcoin_scale},
much more than the scale we are experimenting here.
However, we argue that for edge computing applications where wireless networks are the norm and the performance bottleneck,
1,000 nodes are more than enough for edge computing applications to saturate the hardware resources as reported in~\cite{jwang_icpp18}.
While the next-generation high-performance wireless network, e.g., 5G network, is being widely deployed,
the network's limitation will be somewhat relaxed and then the bottleneck will likely come back to the node scalability.
For that reason, we believe further scaling the number of nodes will be an important research question to be addressed in our future work.

We also compare the block processing time with a recent IoT blockchain system~\cite{huang2019} that operates on a very small scale (i.e., 50 nodes). We observe that at a 50-node scale the protocol~\cite{huang2019} requires almost 3 seconds, whereas at a 100-node scale our proposed DEAN protocol requires less than a quarter second. The reason behind the notable performance degradation in~\cite{huang2019} is, the protocol requires a pre-computation of block allocation prior to each block mining process that consumes a significant amount of processing time. In contrast, the block dissemination mechanism proposed in DEAN is totally independent of the mining process.

\subsection{Sensitivity}
\label{sec:eval_sensitivity}

All experiments thus far discussed assume the ratio between sensor nodes and storage nodes is 1:3,
where each storage node is connected by three sensor nodes and all storage nodes are connected with a full mesh network.
It is a natural question to ask how, if at all, the ratio would impact the performance of DEAN.
The correctness of DEAN is out of the question regarding the ratios as long as the ratio $M$ is larger than one.
Therefore, the remainder of this section will evaluate the impact of different sensor:storage ratios.

We vary the ratios between sensor and storage nodes as 1:2, 1:3, 1:4. and 1:5, 
and report the respective throughput and latency at different scales from 100 to 1,000 nodes. More sensor nodes generate more transaction requests.
Ideally, the performance (both throughput and latency) should be stable with little impact from the ratios---implying a strong stability of the proposed consensus and system implementation. 
Finding out the optimal ratio is part of the parameter tuning procedure and is beyond the scope of this paper. 
In addition, extreme large ratios (e.g., 1:100 or more) is not the norm of today's real systems~\cite{jwang_icpp18} and thus will be investigated in our future work. 

Figure~\ref{fig:ratio-throughput} reports the throughput on different scale of nodes. 
There is no noticeable difference between the ratios under consideration and there is not clear winner at all scales.
For instance, although 1:2 seems to outperform others at the largest scale of 1,000 nodes, there is no consistent pattern to support the argument that a smaller sensor:storage ratio leads to a higher throughput. 

\begin{figure}[!t]
	\centering
	\includegraphics[width=60mm]{./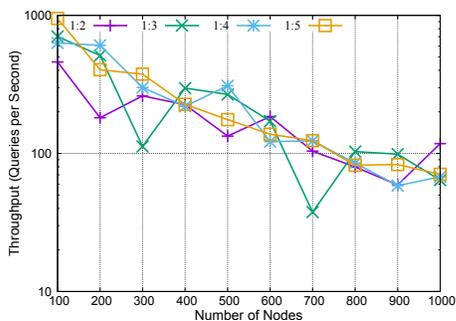}
	\caption{\textbf{DEAN's Throughput with Various Distributions of Sensor and Storage Nodes.} No noticeable difference is found with different ratios between storage and sensor nodes.}
	\label{fig:ratio-throughput}

\end{figure}

In contrast to performance, we do observe a smoother trend for larger sensor:storage ratios.
To see this, the 1:5 plot is clearly following a more linear path than 1:2 and 1:3. 
This phenomenon can be best explained by the discrepant hardware specification in the edge computing network:
the network connecting sensor nodes and their storage node usually delivers high performance than the network infrastructure within the storage nodes.
Therefore, with more sensor nodes, the heterogeneity embedded in the edge network is actually reduced,
thus rendering a smoother trend in the plot.
In the extreme case, if the number of storage nodes is negligible,
then the network would look like a homogeneous one with only sensor nodes and the performance should be linearly impacted by the number of nodes.

In Figure~\ref{fig:ratio-latency}, we compare the latency incurred by different sensor:storage ratios. 
Similarly to throughput, latency does not seem impacted much by varying the ratios,
and a smaller ratio exhibits more zigzags than larger ratios due to the same reasons discussed before:
with more sensor nodes in the network, the entire system works more like a homogeneous system exhibiting a smoother curve in the performance plot.

To summarize, from both Figure~\ref{fig:ratio-throughput} and Figure~\ref{fig:ratio-latency}, we can draw a conclusion that with the increment of transaction requests from sensors, the proposed DEAN protocol is reliable enough in consistent block processing.

\begin{figure}[!t]
	\centering
	\includegraphics[width=60mm]{./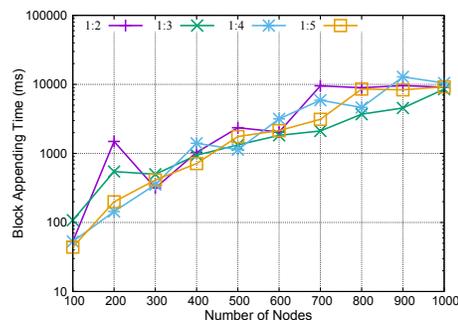}
	\caption{\textbf{DEAN's Latency with Various Distributions of Sensor and Storage Nodes.} No noticeable difference is found with different ratios between storage and sensor nodes.}
	\label{fig:ratio-latency}
\end{figure}

\section{More Related Work}


Researchers are continually studying ways of transforming the traditional approaches used in edge computing. 
For instance, Song et al.~\cite{song_icdcs17} proposed Peer Data Sharing (PDS) that enables edge devices to quickly discover which data exist in nearby peers and retrieve interested data robustly and efficiently. In~\cite{zhang_infocom19}, a hybrid framework was proposed to  address the task allocation problem for delay-sensitive social sensing applications in edge computing. 
Xie et al.~\cite{xie_infocom19} proposed a data indexing mechanism called COordinate-based INdexing (COIN) for the data sharing in the edge computing environment.
In~\cite{huang_icdcs17}, two caching algorithms were proposed to achieve fair workload among selected caching nodes for data sharing in pervasive edge environments. A new optimization framework called MobiQoR was proposed in~\cite{li_icdcs17} to minimize service response time and energy consumption by jointly optimizing the Quality of Result (QoR) of all edge nodes and the offloading strategy. 
Also, OREO, an efficient online algorithm was proposed in~\cite{xu_infocom18} that jointly optimizes dynamic service caching and task offloading to address several key challenges in mobile edge computing systems, including service heterogeneity, unknown system dynamics, spatial demand coupling and decentralized coordination.

In addition, the well-known consensus protocols in block-chain includes proof-of-work (PoW), proof-of-stake (PoS) and practical byzantine fault tolerance (PBFT).
Furthermore, there exists other consensus protocols~\cite{ieyal_nsdi16,amiller_ccs16,Kiayias_2017} developed as an extension of this pioneering protocols with the aim of improving and overcoming their drawbacks.
A detailed comparison and history of blockchain consensus mechanisms can be found in~\cite{chalaemwongwan18}.
As discussed in~\cite{psaras_mobisys18},
future edge computing and IoT systems need decentralized trust and this paper represents one early study toward \textit{decentralized architecture with according protocols} for data integrity in edge computing.

\section{Conclusion}

This paper presents the DEAN consensus protocol to achieve high data fidelity under resource constraints in edge computing.
The key idea of DEAN is partly enlightened by blockchains, 
and its safety is both theoretically proven and experimental verified on a system prototype.
In addition to the improved data fidelity, the system prototype also delivers significantly higher performance than the state-of-the-art alternatives thanks to DEAN's unique design on smart load-balancing both on computation and storage.





{
\scriptsize
\bibliographystyle{unsrt}
\bibliography{ref}

\begin{thebibliography}{10}

\bibitem{wshi_iot16}
W.~Shi, J.~Cao, Q.~Zhang, Y.~Li, and L.~Xu.
\newblock Edge computing: Vision and challenges.
\newblock {\em IEEE Internet of Things Journal}, 3(5):637--646, 2016.

\bibitem{lightdb_vldb18}
Brandon Haynes, Amrita Mazumdar, Armin Alaghi, Magdalena Balazinska, Luis Ceze,
  and Alvin Cheung.
\newblock Lightdb: A dbms for virtual reality video.
\newblock {\em Proc. VLDB Endow.}, 11(10):1192--1205, June 2018.

\bibitem{zhang_infocom19}
H.~{Zhang} and K.~{Zeng}.
\newblock Pairwise markov chain: A task scheduling strategy for
  privacy-preserving sift on edge.
\newblock In {\em IEEE Conference on Computer Communications (INFOCOM)}, 2019.

\bibitem{socc2017capnet}
Anton Burtsev, David Johnson, Josh Kunz, Eric Eide, and Jacobus Van~der Merwe.
\newblock Capnet: security and least authority in a capability-enabled cloud.
\newblock In {\em Proceedings of the 2017 Symposium on Cloud Computing}, pages
  128--141, 2017.

\bibitem{socc2018multi}
Bruno Medeiros, Marcos~A Simplicio~Jr, and Ewerton~R Andrade.
\newblock Multi-tenant isolation of what? building a secure tenant isolation
  architecture for cloud networks.
\newblock In {\em Proceedings of the ACM Symposium on Cloud Computing}, pages
  518--518, 2018.

\bibitem{hzhou_infocom19}
H.~{Zhou}, X.~{Ouyang}, Z.~{Ren}, J.~{Su}, C.~{de Laat}, and Z.~{Zhao}.
\newblock A blockchain based witness model for trustworthy cloud service level
  agreement enforcement.
\newblock In {\em IEEE Conference on Computer Communications (INFOCOM)}, 2019.

\bibitem{jeephacker_2015}
{Hackers Remotely Kill a Jeep on the Highway}.
\newblock \url{
  https://www.wired.com/2015/07/hackers-remotely-kill-jeep-highway}, Accessed
  2019.

\bibitem{malware-G-kelly-2014}
{Report: 97\% of mobile malware is on android. This is the easy way you stay
  safe.}
\newblock \url{ shorturl.at/rvxF1}, Accessed 2019.

\bibitem{malware-li2018significant}
Jin Li, Lichao Sun, Qiben Yan, Zhiqiang Li, Witawas Srisa-an, and Heng Ye.
\newblock Significant permission identification for machine-learning-based
  android malware detection.
\newblock {\em IEEE Transactions on Industrial Informatics}, 14(7):3216--3225,
  2018.

\bibitem{jchen_infocom18}
J.~{Chen}, S.~{Yao}, Q.~{Yuan}, K.~{He}, S.~{Ji}, and R.~{Du}.
\newblock Certchain: Public and efficient certificate audit based on blockchain
  for tls connections.
\newblock In {\em IEEE Conference on Computer Communications (INFOCOM)}, 2018.

\bibitem{jwang_icpp18}
Jiayao Wang, Abdullah Al-Mamun, Tonglin Li, Linhua Jiang, and Dongfang Zhao.
\newblock Toward performant and energy-efficient queries in three-tier wireless
  sensor networks.
\newblock In {\em 47th International Conference on Parallel Processing (ICPP)},
  2018.

\bibitem{lamport_paxos02}
Leslie Lamport.
\newblock Paxos made simple, fast, and byzantine.
\newblock In {\em Procedings of the 6th International Conference on Principles
  of Distributed Systems. {OPODIS} 2002, Reims, France, December 11-13, 2002},
  pages 7--9, 2002.

\bibitem{castro_pbft02}
Miguel Castro and Barbara Liskov.
\newblock Practical byzantine fault tolerance and proactive recovery.
\newblock {\em ACM Trans. Comput. Syst.}, 20(4):398--461, November 2002.

\bibitem{bitcoin}
{Bitcoin}.
\newblock \url{ https://bitcoin.org/bitcoin.pdf}, Accessed 2018.

\bibitem{gmbmw_bc2019}
{GM and BMW Back Blockchain for Self-driving Cars}.
\newblock \url{
  https://www.coindesk.com/gm-bmw-back-blockchain-data-sharing-for-self-driving-cars},
  Accessed 2019.

\bibitem{ethereum}
{Ethereum}.
\newblock \url{ https://www.ethereum.org/}, Accessed 2018.

\bibitem{parity}
{Parity}.
\newblock \url{ https://ethcore.io/parity.html/}, Accessed 2018.

\bibitem{hyperledger}
{Hyperledger}.
\newblock \url{ https://www.hyperledger.org/}, Accessed 2018.

\bibitem{huang2019}
Yaodong Huang, Jiarui Zhang, Jun Duan, Bin Xiao, Fan Ye, and Yuanyuan Yang.
\newblock Resource allocation and consensus on edge blockchain in pervasive
  edge computing environments.
\newblock In {\em 2019 IEEE 39th International Conference on Distributed
  Computing Systems (ICDCS)}, pages 1476--1486. IEEE, 2019.

\bibitem{mzamani_ccs18}
Mahdi Zamani, Mahnush Movahedi, and Mariana Raykova.
\newblock Rapidchain: Scaling blockchain via full sharding.
\newblock In {\em ACM SIGSAC Conference on Computer and Communications Security
  (CCS)}, 2018.

\bibitem{jwang_nsdi19}
Jiaping Wang and Hao Wang.
\newblock Monoxide: Scale out blockchains with asynchronous consensus zones.
\newblock In {\em 16th {USENIX} Symposium on Networked Systems Design and
  Implementation, {NSDI} 2019, Boston, MA, February 26-28, 2019.}, pages
  95--112, 2019.

\bibitem{hdang_sigmod19}
Hung Dang, Tien Tuan~Anh Dinh, Dumitrel Loghin, Ee{-}Chien Chang, Qian Lin, and
  Beng~Chin Ooi.
\newblock Towards scaling blockchain systems via sharding.
\newblock In {\em Proceedings of the 2019 International Conference on
  Management of Data, {SIGMOD} Conference 2019, Amsterdam, The Netherlands,
  June 30 - July 5, 2019.}, pages 123--140, 2019.

\bibitem{kzhang_icdcs18}
Kaiwen Zhang and Hans{-}Arno Jacobsen.
\newblock Towards dependable, scalable, and pervasive distributed ledgers with
  blockchains.
\newblock In {\em 38th {IEEE} International Conference on Distributed Computing
  Systems (ICDCS)}, 2018.

\bibitem{blockbench_sigmod17}
Tien Tuan~Anh Dinh, Ji~Wang, Gang Chen, Rui Liu, Beng~Chin Ooi, and Kian-Lee
  Tan.
\newblock Blockbench: A framework for analyzing private blockchains.
\newblock In {\em ACM International Conference on Management of Data (SIGMOD)},
  2017.

\bibitem{stonebreaker_3pc83}
D.~Skeen and M.~Stonebraker.
\newblock A formal model of crash recovery in a distributed system.
\newblock {\em IEEE Trans. Softw. Eng.}, 9(3):219--228, May 1983.

\bibitem{docker}
{Docker}.
\newblock \url{https://github.com/docker/docker}, Accessed July 16, 2019.

\bibitem{ycsb}
{YCSB}.
\newblock \url{ https://github.com/brianfrankcooper/YCSB/wiki/Core-Workloads},
  Accessed 2018.

\bibitem{bitcoin_scale}
{Bitcoin Scale}.
\newblock \url{ https://bitnodes.earn.com}, Accessed 2019.

\bibitem{song_icdcs17}
Xinong Song, Yaodong Huang, Qian Zhou, Fan Ye, Yuanyuan Yang, and Xiaoming Li.
\newblock Content centric peer data sharing in pervasive edge computing
  environments.
\newblock In {\em IEEE International Conference on Distributed Computing
  Systems (ICDCS)}, 2017.

\bibitem{xie_infocom19}
Junjie Xie, Chen Qian, Deke Guo, Minmei Wang, Shouqian Shi, and Honghui Chen.
\newblock Efficient indexing mechanism for unstructured data sharing systems in
  edge computing.
\newblock In {\em IEEE Conference on Computer Communications (INFOCOM)}, 2019.

\bibitem{huang_icdcs17}
Yaodong Huang, Xinong Song, Fan Ye, Yuanyuan Yang, and Xiaoming Li.
\newblock Fair caching algorithms for peer data sharing in pervasive edge
  computing environments.
\newblock In {\em IEEE International Conference on Distributed Computing
  Systems (ICDCS)}, 2017.

\bibitem{li_icdcs17}
Yongbo Li, Yurong Chen, Tian Lan, and Guru Venkataraman.
\newblock Mobiqor: Pushing the envelope of mobile edge computing via
  quality-of-result optimization.
\newblock In {\em IEEE International Conference on Distributed Computing
  Systems (ICDCS)}, 2017.

\bibitem{xu_infocom18}
Jie Xu, Lixing Chen, and Pan Zhou.
\newblock Joint service caching and task offloading for mobile edge computing
  in dense networks.
\newblock In {\em IEEE INFOCOM 2018 - IEEE Conference on Computer
  Communications}, 2018.

\bibitem{ieyal_nsdi16}
Ittay Eyal, Adem~Efe Gencer, Emin~G\"{u}n Sirer, and Robbert Van~Renesse.
\newblock Bitcoin-ng: A scalable blockchain protocol.
\newblock In {\em Usenix Conference on Networked Systems Design and
  Implementation (NSDI)}, 2016.

\bibitem{amiller_ccs16}
Andrew Miller, Yu~Xia, Kyle Croman, Elaine Shi, and Dawn Song.
\newblock The honey badger of bft protocols.
\newblock In {\em ACM SIGSAC Conference on Computer and Communications Security
  (CCS)}, 2016.

\bibitem{Kiayias_2017}
Aggelos Kiayias, Alexander Russell, Bernardo David, and Roman Oliynykov.
\newblock Ouroboros: A provably secure proof-of-stake blockchain protocol.
\newblock In {\em Advances in Cryptology (CRYPTO)}, 2017.

\bibitem{chalaemwongwan18}
Nutthakorn Chalaemwongwan and Werasak Kurutach.
\newblock State of the art and challenges facing consensus protocols on
  blockchain.
\newblock In {\em IEEE International Conference on Information Networking
  (ICOIN)}, 2018.

\bibitem{psaras_mobisys18}
Ioannis Psaras.
\newblock Decentralised edge-computing and iot through distributed trust.
\newblock In {\em Proceedings of the 16th Annual International Conference on
  Mobile Systems, Applications, and Services (MobiSys)}, 2018.

\end{thebibliography}
}

\end{document}